\begin{document}
\newcommand{\ket}[1]{\ensuremath{\left|#1\right\rangle}}
\newcommand{\bra}[1]{\ensuremath{\left\langle#1\right|}}
\newtheorem{definition}{Definition} 
\newtheorem{theorem}{Theorem}

\title{Proposal for Quantum Ciphertext-Policy Attribute-Based Encryption}
\author{Asmita Samanta$^{1}$\footnote {asmita.samanta@tcgcrest.org}, Arpita Maitra$^{1}$\footnote{arpita76b@gmail.com} and Shion Samadder Chaudhury$^1$\footnote{chaudhury.shion@gmail.com}} 
\affiliation{$^1$Institute for Advancing Intelligence (IAI),\\
TCG CREST (Centres for Research and Education in Science and Technology), India.\\}

\begin{abstract}

A Quantum Ciphertext-Policy Attribute-Based Encryption scheme (QCP-ABE) has been presented. In classical domain, most of the popular ABE schemes are based on the hardness of the Bilinear Diffie-Hellman Exponent problem, which has been proven to be vulnerable against Shor's algorithm. Recently, some quantum safe ABE schemes have been proposed exploiting the Lattice problem.  However, no efficient Quantum Attribute-Based Encryption scheme has been reported till date. In this backdrop, in the present initiative, we propose a quantum CP-ABE scheme exploiting Quantum Key Distribution (QKD) and Quantum Error Correcting code. A Semi Quantum version of the scheme has also been considered. Finally, we introduced dynamic access structure in our proposed protocols. 
\end{abstract}
\maketitle

\section{Introduction}

We are currently living in a digital world. Here we have to store and handle tons and tons of data. If we store the huge amounts of data generated every day in our own computers then storage capacity becomes a big issue (especially for large organizations). That is why cloud storing and cloud computing have become very trendy today. 


Cloud is the accumulation of many servers which can be accessed over the internet. These cloud servers are located in data centres all over the world. Cloud computing helps the users and the organizations to use this data just by running some software applications on their own systems.


For Cloud Computing, the main issue is cloud storage. This is supposed to be accessible by anyone. However, it is not desirable that any user can access any data. For example, in a hospital, it is expected that only the doctors and nurses can access the medical history of a patient whereas an administration staff should not be given the permit. There must be some restrictions for the users to access certain data. Hence, to maintain data privacy the data owner should encrypt the data before storing it in cloud storage. 

Now the problem is if one authorized person (data consumer) wants to access these data how can he/she do that? That is how can the data consumer decrypt this data? Simple private or public key encryption does not work here. This is because if we use a private key encryption scheme then the data owner has to send the secret decryption key to all the data consumers by a secret channel, or if we use a public key encryption scheme then data consumers have to send public keys corresponding to their secret decryption key to the data owner and data owner has to encrypt a single data with many keys, which is not a very good solution. Moreover, the data owner does not necessarily know the data consumer in advance. 

A solution to the above problem is to use Attribute-Based Encryption (ABE) \cite{sahai,zhang,cpabe,cpabe2,gl,mabe}, where the data owner specifies which users can decrypt which data. It will depend on specific access control policies. Additionally, any communication between the data owner and the data consumer is not required. 

In \cite{sahai}, Amit Sahai and Brent Waters first introduced the idea of Attribute-Based Encryption (ABE). Since then several variants of the ABE scheme have been introduced. In \cite{zhang}, a detailed discussion about the existing and useful variants of ABE, and the comparisons regarding the usability have been presented. In \cite{cpabe}, John Bethencourt, Amit Sahai, and Brent Waters and in \cite{cpabe2}, Brent Waters proposed efficient and secure constructions of ciphertext-policy ABE (CP-ABE). Whereas in paper \cite{gl}, Vipul Goyal, Omkant Pandey, Amit Sahai, and Brent Waters reported efficient and secure construction of key-policy ABE (KP-ABE). In  \cite{mabe}, Melissa Chase presented  Multi-Authority Attribute Based Encryption. In the paper, she nicely explained the evolution of the ABE constructions and the requirements for those evolutions. In KP-ABE, the secret keys of the users are generated based on an access tree that defines the scope for the privilege of the user, and data are encrypted over the set of attributes. In contrast, CP-ABE uses access trees to encrypt data and the secret keys are generated over a set of attributes.

Most of the recent popular ABE schemes are based on the hardness of the Bilinear Diffie-Hellman Exponent problem. It is well proven that such a scheme can be easily broken by Shor's Algorithm~\cite{shor94}. Hence, these well-known popular ABE schemes are not {\em Quantum-Safe} at al. 

There are some lattice-based ABE schemes \cite{la1,la2} which are believed to be {\em Quantum Safe}. However, those are very involved to understand as well as to implement. On the other hand, there is no {\em Quantum CP-ABE} scheme has been reported till date. In this direction, we propose two QCP-ABE schemes; one is fully Quantum whereas another is Semi-Quantum. 

In fully quantum case, we used Quantum Key Distribution (QKD) for encryption and Quantum Error Correcting Code, specially Calberbank-Shor-Steane (CSS) code for secret distribution. In semi quantum case, we consider a classical Linear Secret Sharing Scheme (LSSS) instead of quantum CSS code for secret distribution. For both the schemes, we analyze the security issues. Finally, we introduced a dynamic access structure in our proposed QCP-ABE schemes.

In a secret sharing scheme, a secret is shared among a set of parties (participants, servers, nodes) in such a way so that certain predetermined subsets of parties called \textit{qualified sets} can reconstruct the secret whereas some predetermined subsets of parties called \textit{forbidden sets} do not have any information about the secret. The monotone collection of qualified sets of participants is called an \textit{access structure}. Secret sharing was introduced independently by Shamir \cite{Shamir} and Blakely \cite{Blakely}. 

Classical (traditional) secret sharing schemes assume that the number of participants, as well as the access structure, are fixed. However, many real-life scenarios demand more flexible secret sharing schemes where the dealer can add/delete parties, modify the access structure, renew shares of participants and modify the conditions for accessing the secret. In the literature, such schemes have been termed dynamic \cite{dynamic}, proactive \cite{proactive}, sequential \cite{sequential}, evolving secret sharing schemes \cite{evolving1,evolving2}. With the advent of quantum computation, quantum secret sharing schemes and quantum versions of the above-mentioned variants of secret sharing schemes have been widely studied \cite{qd1,qd2,qd3,qd4,qd5}. Quantum secret sharing schemes come in two variants: 1) protecting classical information (semi-quantum) and 2) protecting quantum information (fully quantum). In \cite{fq} the authors pointed out that sharing a quantum state is more difficult than sharing a classical information and hence fully quantum secret sharing schemes are much fewer than semi-quantum ones.

\par In the present manuscript, the reason behind considering dynamic access structure is to fulfil the need of the real world. The access control policy for an attribute-based encryption is an access structure of the secret sharing scheme. In real-life, this access structure/policy may change with time. Not only that, the conditions for accessing the data might also be changed. Hence a practical quantum attribute-based encryption scheme should be able to handle the change in access policies with time. Classical attribute based schemes with changing access policies have been considered before \cite{dabe1,dabe2,dabe3}. However, to the best of our knowledge, in the quantum paradigm, this is the first effort for introducing dynamic access structure in QCP-ABE schemes.

\medskip
\noindent\textbf{Organization}: The paper is organized as follows: In section \ref{prel} we outline the notations and preliminaries required for our work. Next, we present an overview of the techniques and ideas in section \ref{overview}. This is followed by sections \ref{technical} and \ref{correct} where we present in detail the constructions, proofs, and the parameters of our construction. We introduce dynamic access structure in the motivation towards more practicality in section \ref{application}. We compare our construction with existing schemes in section \ref{discussion} and the paper is concluded in section \ref{conclusion}.



\section{\label{prel}Preliminaries}
In this section, we introduced a list of few abbreviations and notations which have been used in the manuscript and  brief descriptions of some relevant classical schemes.
\subsection{Abbreviations and Notations}
Following is the list of few abbreviations and notations which have been used in the manuscript.
\begin{itemize}
\item AA: Attribute Authority
\item DO: Data Owner
\item DU: Data User
\item CSP: Cloud Service Provider
\item ABE: Attribute-Based Encryption
\item KP-ABE: Key-Policy Attribute-Based Encryption
\item CP-ABE: Ciphertext-Policy Attribute-Based Encryption
\item LSSS: Linear Secret Sharing Scheme
\item LFSR: Linear-Feedback Shift Register
\item QCP-ABE: Quantum Ciphertext-Policy Attribute-Based Encryption
\item $M$: Message bit-string
\item $m$: Length of the bit-string $B$
\item $B$: Bit-string generated by BB84 and used as seed to the LFSR
\item $B'$: Bit-string corresponding to the basis choice
\item $Z$: $\{\ket{0},\ket{1}\}$
\item $X$: $\{\ket{+},\ket{-}\}$
\item $M_i$: $i$-th bit of $M$
\item $B'_i$: $i$-th bit of $B'$
\item $C$: Ciphertext
\item $mt$: Message-tag
\item $\mathcal{A}$: Access Policy used to encrypt $B'$
\item $n$: Number of total attributes
\item $att_k$: $k$-th attribute
\item $S$: Set of secret shares corresponding to all the attributes
\item $msk$: Master Secret Key used by AA to encrypt $S$
\item $es$: Encrypted form of $S$ using $msk$
\item $al$: Attribute-list of an user DU
\item $AS$: Set of secret shares corresponding to attributes present in $al$
\item $C_i$: $i$-th bit of $C$
\item $\Gamma$: Access structure used for Dynamic Access Policy based QCP-ABE
\item $\Gamma'$: Modified form of $\Gamma$
\item $t$: Time
\item $t_c$: Current time
\item $q$: Qualified
\item $f$: Forbidden
\end{itemize}

\subsection{Attribute-Based Encryption (ABE)}

When we store some data in cloud storage, it becomes accessible to anyone. So, it is necessary to ensure that only authorized users can access the data. To ensure this, we can store encrypted data in cloud storage. We can use either the traditional private-key encryption technology or the traditional public-key encryption technology for the encryption. In a private-key encryption-based access control system, when a new data user (DU) enters into the system, the data owner (DO) has to share the secret key with the new DU such that the new DU can access the data. Similarly, in traditional public-key encryption-based access control, the DO has to re-encrypt its data with the public key of the new DU. That is the DO needs to encrypt single data more than once with different DU's public keys. And also, in both these cases, the DO has to have detailed knowledge about the DUs in advance.\\

So obviously, these two access control mechanisms lack flexibility and scalability (flexibility refers to the expressivity of access control policies, and scalability refers to the effect of newly joined data users on the access control system). In this regard, the Attribute-Based Encryption (ABE) technology plays a crucial role in obtaining access control systems with fine granularity and scalability. In ABE, the DO does not need to know the identity of a specific DU before encryption and the flexible attributes are inserted into the ciphertext. Also, DOs have to do nothing when a new DU joins the system. Therefore, an access control system with ABE permits both flexibility and scalability.
Sahai and Waters first introduced the ABE notion, and it has successfully attracted considerable research efforts as it is a potential cryptographic primitive \cite{zhang}. ABE has two categories, ciphertext-policy ABE (CP-ABE) and key-policy ABE (KP-ABE).\\
~\\
In \textbf{CP-ABE}, a user’s attribute secret key is associated with an attribute list, and a ciphertext specifies an access policy that is defined over an attribute universe of the system. A ciphertext can be decrypted by a user if and only if the user’s attribute list matches the ciphertext’s access policy.\\
In \textbf{KP-ABE}, an access policy, which is defined over the system’s attribute universe, is encoded into a user’s attribute secret key and ciphertext is created concerning an attribute list. A ciphertext can be decrypted by a user if and only if the corresponding attribute list matches the access policy associated with the user’s attribute secret key.

\subsection{Ciphertext-Policy Attribute-Based Encryption (CP-ABE)}

In this paper, we are interested in CP-ABE. The basic CP-ABE has four components, namely the cloud service provider (CSP), the attribute authority (AA), the DO, and the DU \cite{zhang}. The main four algorithms which are used in basic classical CP-ABE are as follows:

\begin{enumerate}
\item \textbf{SetUp ($\lambda$) $\rightarrow$ (PK, MK)}: This algorithm is known as the system setup algorithm. The Attribute Authority (AA) runs this algorithm at the beginning. It takes security parameter $\lambda$ as input and produces system public key PK and master key MK.
\item \textbf{KeyGeneration (PK, MK, L) $\rightarrow$ (SK$_L$) }: This algorithm is known as the attribute key generation algorithm, which is also performed by the AA. The AA takes system public key PK, master key MK, and an attribute list L as inputs and generates SK$_L$ as the attribute secret key corresponding to L. Here the attribute list is the set of attributes of a user and SK$_L$ is called the attribute secret key of that user.
\item \textbf{Encryption (PK, M, $\mathbb{A}$) $\rightarrow$ (CT$_{\mathbb{A}}$)}: This algorithm is performed by the data owner (DO). The DO first chooses an access policy $\mathbb{A}$ for the desired message M, and then takes PK, M, and $\mathbb{A}$ as inputs and produces a ciphertext CT$_{\mathbb{A}}$ of message M associated with the access policy $\mathbb{A}$. This ciphertext is stored on the cloud service provider CSP. 
\item \textbf{Decryption (PK, CT$_{\mathbb{A}}$, SK$_L$) $\rightarrow$ (M or $\perp$)}: This algorithm is performed by the data user DU (DU is basically the data consumer). It takes system public key PK, a ciphertext CT$_{\mathbb{A}}$ of M associated with $\mathbb{A}$, and an attribute secret key SK$_L$ corresponding to DU's attribute list L, and returns M if L satisfies the access policy $\mathbb{A}$. Otherwise outputs the error symbol $\perp$ as an indication of failure of decryption.
\end{enumerate}

But to design Quantum CP-ABE we have slightly modified the above functionalities. Though our construction slightly differs from the classical one, it maintains the main property of ABE, i.e. there will be no communication between DO and DU and DU can access a data if and only if he/she satisfies the access structure attached with the ciphertext using its attribute secret key obtained from AA.

\subsection{Access Structure}
In CP-ABE we encrypts the data under some Access Policy. This Access policy is actually an access structure and any consumer can access the data iff it satisfies the access structure. We mainly prefer monotone Access Structure for our model. Now we are going to give the formal definition \cite{cpabe}.\\~\\
\textbf{Access Structure}: We denote $P = \{ P_1 , P_2 , ... , P_T \}$ as a set of parties. A collection $A \subseteq 2^{\{ P_1 , P_2 , ... , P_T\}}$ is monotonic if $\forall A_1, A_2$ : if $A_1 \in A$ and $A_1 \subseteq A_2$, then $A_2 \in A$. An (monotone) access structure is a (monotone) collection $A$ of non-empty subsets of $P = \{ P_1 , P_2 , ... , P_T \}$. That is, $A \subseteq 2^{\{ P_1 , P_2 , ... , P_T\}} - \{0\}$. So, an access structure $A$ is basically a non-empty set of subsets of $P$. If a subset $B$ of $P$ is in $A$, then we said that that subset $B$ is an authorized set. But if $B$ is not in $A$, then we said that subset $B$ is unauthorized.\\
~\\Here in our case these parties are actually the attributes of an user.\\
There are various types of Access Structure like tree, threshold, AND-OR, LSSS etc.

\subsection{Linear Secret Sharing Scheme (LSSS)}

\textbf{Linear Secret Sharing Scheme (LSSS)} \cite{cpabe2}: Let $P$ denote a set of parties, $s \in \mathbb{Z}_p$ be the shared secret. A secret sharing scheme $\Pi$ over $P$ is linear (over $\mathbb{Z}_p$) if it has the following properties:
\begin{enumerate}
\item The shares of $s$ for each party form a vector over $\mathbb{Z}_p$.
\item There is a matrix $W \in \mathbb{Z}_{p}^{l \times n}$ which is called the share-generating matrix for $\Pi$. For all $i = 1, ..., l$, a function $\rho \in \mathcal{F}( [l] \mapsto P)$ associates the row $W_i$ with a party. To generate the shares, we choose a column vector $\vec{v} = ( s , r_2 , ... , r_n )^{T}$, where $r_2 , ..., r_n$ are randomly picked from $\mathbb{Z}_p$, then $W \cdot \vec{v}$ is the vector of l shares of s according to $\Pi$. The share $(W \cdot \vec{v})_i$ belongs to the party $\rho (i)$.
\end{enumerate}

Every linear secret sharing scheme has the following linear reconstruction property : assume that $\mathbb{A}$ is an access structure. $\Pi$ is an LSSS for $\mathbb{A}$. So, $\mathbb{A} = (W, \rho)$. Now assume $S$ denotes an authorized set, i.e. $S$ satisfies the access structure $\mathbb{A}$. Then let $I = \{ i : \rho ( i ) \in S \}$ be the index set of rows whose labels are in $S$. There exist constants $\{ w_i \in \mathbb{Z}_p \}_{i \in I}$ such that: if the shares $\{ \lambda_i = (W \cdot \vec{v})_i \}$ are valid, then we have $\Sigma_{i \in I} w_i \lambda_i = s$. But for unauthorized sets, no such constants exist. This property is very important to decrypt the message.\\

Also, we can transform tree access structure, or AND-OR access structure into threshold access structure and any threshold access structure can be converted into LSSS. Using a simple algorithm from paper \cite{3,4} we can do it efficiently.\\~\\
\textbf{Example}: Here we will use Shamir’s secret sharing scheme and the construction of the following theorem from paper \cite{3}.\\~\\
\textbf{Theorem}: Let $A_1$ and $A_2$ be monotone access structures defined on participant sets $\mathcal{P}_1$ and $\mathcal{P}_2$, realized by LSSS $(W^{(1)} , \rho^{(1)} )$ of size $w_1$ and $(W^{(2)} , \rho^{(2)} )$ of size $w_2$, respectively. Let $P_z \in \mathcal{P}_1$. There exists an LSSS $(W, \rho)$ of size ~$w_1 + (w_2 - 1) \cdot q$~ realizing the access structure $A_1 (P_z \rightarrow A_2 )$, where $q$ is
the number of rows labeled by $P_z$ in $(W^{(1)} , \rho^{(1)} )$.\\ 
~\\
Suppose the access policy is $((A \wedge B) \vee (C \wedge D)) \wedge E$. Let M is the access matrix and L is the vector whose co-ordinates are attributes. Initially we let $W = (1)$ and $L = (((A,B,2),(C,D,2),1),E,2)$.\\
~\\
\textbf{Step - 1}:  
$W = \begin{pmatrix}
1 & 1\\
1 & 2
\end{pmatrix}$ and $L = \begin{pmatrix}
((A,B,2),(C,D,2),1)\\
E
\end{pmatrix}$\\~\\~\\
\textbf{Step - 2}:  
$W = \begin{pmatrix}
1 & 1\\
1 & 1\\
1 & 2
\end{pmatrix}$ and $L = \begin{pmatrix}
(A,B,2)\\
(C,D,2)\\
E
\end{pmatrix}$\\~\\~\\
\textbf{Step - 3}:  
$W = \begin{pmatrix}
1 & 1 & 1\\
1 & 1 & 2\\
1 & 1 & 0\\
1 & 2 & 0
\end{pmatrix}$ and $L = \begin{pmatrix}
A\\
B\\
(C,D,2)\\
E
\end{pmatrix}$\\~\\~\\
\textbf{Step - 4}:  
$W = \begin{pmatrix}
1 & 1 & 1 & 0\\
1 & 1 & 2 & 0\\
1 & 1 & 0 & 1\\
1 & 1 & 0 & 2\\
1 & 2 & 0 & 0
\end{pmatrix}$ and $L = \begin{pmatrix}
A\\
B\\
C\\
D\\
E
\end{pmatrix}$\\~\\~\\
Here $W$ is the generating matrix of LSSS corresponding to the given access policy and $\rho$ maps $i$-th row of $W$ to $i$-th co-ordinate of $L$.\\
Let us assume that our secret is $s$ and we take a vector $\vec{v} = (s, r_1, r_2, r_3)$, where $r_i$'s are the random values.\\
Now we just compute all the secret shares for the attributes $A$, $B$, $C$, $D$ and $E$ one by one.\\
Secret share for $A$ ($\lambda_{A}$) is $(s + r_1 + r_2)$\\
Secret share for $B$ ($\lambda_{B}$) is $(s + r_1 + 2 \cdot r_2)$\\
Secret share for $C$ ($\lambda_{C}$) is $(s + r_1 + r_3)$\\
Secret share for $D$ ($\lambda_{D}$) is $(s + r_1 + 2 \cdot r_3)$\\
Secret share for $E$ ($\lambda_{E}$) is $(s + 2 \cdot r_1)$\\

Let us consider a set of attributes $S = \{ A, B, E \}$.\\
At the time of decryption, we first compute the matrix just following the algorithm of the paper \cite{3}.\\
~\\
\textbf{Step - 1}:  
$W^{'} = \begin{pmatrix}
1 & 1\\
1 & 2
\end{pmatrix}$ and $L^{'} = \begin{pmatrix}
((A,B,2),(C,D,2),1)\\
E
\end{pmatrix}$\\~\\~\\
\textbf{Step - 2}:  
$W^{'} = \begin{pmatrix}
1 & 1\\
1 & 1\\
1 & 2
\end{pmatrix}$ and $L^{'} = \begin{pmatrix}
(A,B,2)\\
(C,D,2)\\
E
\end{pmatrix}$\\~\\~\\
\textbf{Step - 3}:  
$W^{'} = \begin{pmatrix}
1 & 1 & 1\\
1 & 1 & 2\\
1 & 1 & 0\\
1 & 2 & 0
\end{pmatrix}$ and $L^{'} = \begin{pmatrix}
A\\
B\\
(C,D,2)\\
E
\end{pmatrix}$\\~\\~\\
\textbf{Step - 4}:  
$W^{'} = \begin{pmatrix}
1 & 1 & 1\\
1 & 1 & 2\\
1 & 2 & 0
\end{pmatrix}$ and $L^{'} = \begin{pmatrix}
A\\
B\\
E
\end{pmatrix}$\\~\\

Then we compute a vector $\vec{c} = (c_1, c_2, c_3)$ such that $\vec{c} \cdot W^{'} = (1, 0, 0)$.\\
It can be easily computed that $\vec{c} = (4, -2, -1)$ serves our purpose. It also implies that $S$ is an authorized set of attributes.\\

To get the secret just compute $(c_1 \cdot \lambda_{A} + c_2 \cdot \lambda_{B} + c_3 \cdot \lambda_{E})$.\\

$c_1 \cdot \lambda_{A} + c_2 \cdot \lambda_{B} + c_3 \cdot \lambda_{E}$\\
$= 4 \cdot (s + r_1 + r_2) + (-2) \cdot (s + r_1 + 2 \cdot r_2) + (-1) \cdot (s + 2 \cdot r_1)$\\
$= s \cdot (4 - 2 - 1) + r_1 \cdot (4 - 2 - 2) + r_2 \cdot (4 - 4)$\\
$= s$\\

So it is very easy to compute the secret using the linear reconstruction property of LSSS.
\subsection{Dynamic Access-Structure Based Secret Sharing}

For a set of secrets $\mathcal{S}$, a set of parties $P = \{p_{1}, \ldots, p_{n}\}$ and an access structure $\Gamma$ on $P$, a dynamic secret sharing scheme has four components \cite{dss}, 

\begin{itemize}
    \item \textbf{Share}: For a secret $S \in \mathcal{S}$, the dealer outputs shares $sh_{1}, \ldots, sh_{n}$ to be shared among the parties in $P$.
    
    \item \textbf{Addition}: For a new participant $p_{n+1}$, and a subset $R \subset P$, if $R \notin \Gamma$, dealer assigns share $\perp$ to $p_{n+1}$. Otherwise, $R \in \Gamma$ and without secret reconstruction, dealer assigns new share $sh_{n+1}$ to $p_{n+1}$. 
    
    \item \textbf{Reset}: For a new set of parties $P^{*} = \{p_{n+1}, \ldots, p_{r}\}$, dealer forms a modified set of parties $P^{new} = P \cup P^{*}$ and forms a new access structure $\Gamma'$. For a subset $R\subset P^{new}$, if $R \notin \Gamma$, dealer assigns new shares $\perp$ to the parties of $R \cap P^{*}$. Otherwise $R\in \Gamma'$ and dealer assigns  shares $sh_{1}^{new}, \ldots, sh_{|R|}^{new}$ to parties in $R$ without secret reconstruction and the old shares of the parties of $R\cap P$ are deleted. 
    
    \item \textbf{Reconstruct}: The dealer takes shares of a subset of parties $R \subset P(\text{or }P^{new})$ as input and if $R \in \Gamma (\text{or }\Gamma')$, outputs the secret, or outputs $\perp$ if $R  \notin \Gamma (\text{or }\Gamma')$
\end{itemize}

\section{\label{overview}Basic Idea of the Construction and Used Tools}
Before going to the formal description of the proposed algorithms, in this section, we are giving an outline of the idea behind the schemes; Fully Quantum CP-ABE and Semi Quantum CP-ABE. 

For both of our schemes, we have used Quantum Key Distribution (QKD) protocol for encryption of a message. For simplicity, we consider BB84~\cite{bb84} here. However, any QKD protocol is applicable for the proposed schemes. Firstly, we present Semi-Quantum CP-ABE followed by Fully Quantum CP-ABE.

In the first scheme, i.e. for Semi-Quantum scheme, our idea is very simple. We encrypt the message using two non-orthogonal bases $\{\ket{0}, \ket{1}\}$ and $\{\ket{+},\ket{-}\}$, where $\ket{+}=\frac{1}{\sqrt{2}}(\ket{0}+\ket{1})$ and $\ket{-}=\frac{1}{\sqrt{2}}(\ket{0}-\ket{1})$. The security comes from the  ``no-cloning" theorem~\cite{wootters,dieks} and indistinguishability of two non-orthogonal states. A random bit string will be generated to decide the order of the bases for encryption. If the bit is 0, the message bit will be encrypted in $\{\ket{0},\ket{1}\}$ basis. If the bit is 1, the message bit will be encrypted in $\{\ket{+},\ket{-}\}$ basis. Hence, if one can get this sequence of basis choice then he/she can decrypt the actual message bit. 
 In Semi-Quantum case, we take the integer value of this bit pattern and use the classical LSSS scheme to generate the secret shares for that integer. Shares will be distributed amongst the users using the attributes defined by DO for each of them.\\
 
Here the parties of LSSS are the attributes of the users. Attributes are names, designations, departments, job ids, etc. of the users. We distribute the secret between all of these attributes. Now whenever one user asks for its attribute secret key, AA gives him/her the secret shares corresponding to the attributes he/she owns. Then the user uses these secret shares to reconstruct the secret. If the user contains the set of attributes that is a qualifying set in the access structure, then he/she can reconstruct the secret integer, and hence he/she can decrypt the actual message bit string.\\
~\\
For the second scheme, i.e., for the Fully Quantum scheme, we have used a quantum secret sharing scheme based on CSS code~\cite{nc}. Here the bits in the bit string which will decide the encryption bases are considered as quantum logical bits, i.e,  bit $0$ as quantum bit $\ket{0}_L$ and bit $1$ as quantum bit $\ket{1}_L$ (Here $\ket{0}_L$ and $\ket{1}_L$ are generated using quantum CSS code). Then we generate secret shares of $\ket{0}_L$ and $\ket{1}_L$ using the quantum secret sharing scheme~ \cite{maitra} and gave the attribute secret key as follows:\\
If the number of attributes in the system is $n$ and the number of elements (bits) in the bit string is $m$ then secret share collection is
$S := \{\{$secret share of the qubit $\ket{j}_L$ corresponding to the k-th attribute $att_k$ $\}_k\}_i$ where  $i \in \{1,2,...,m\}$, $k \in \{1,2,...,n\}$ and $j \in \{0,1\}$.\\
Whenever an user asks for its attribute secret key then AA sends (whose attribute list is $al := \{att_{i_j}\}_{j \in [n']}$ where $n' < n$) $AS := \{\{$secret share of the qubit $\ket{j}_L$ corresponding to the k-th attribute $att_k$ $\}_{att_k \in al}\}_i$ (where in the sequence $i$-th basis choice is $j$, $i \in \{1,2,...,m\}$, $k \in \{1,2,...,n\}$ and $j \in \{0,1\}$) to the user. Then user use this to reconstruct each $m$ bit of the sequence of basis choice and hence he/she can easily decrypt the actual message string.

 For example, let the bit string which decides the encryption bases for a certain message $M$ be $01100011$. Now, consider the seven qubits Stean Code~\cite{nc}. The code is constructed from the two classical linear codes such that $\{0\}\subset {C}_{1} \subset C \subset \mathbb {F}_{2}^{7}$ with the generator matrices 

$$ G =
\begin{pmatrix}
1&0&0&0&0&1&1\\
0&1&0&0&1&0&1\\
0&0&1&0&1&1&0\\
0&0&0&1&1&1&1
\end{pmatrix},
$$

$$ G_1 =
\begin{pmatrix}
0&0&0&1&1&1&1\\
0&1&1&0&0&1&1\\
1&0&1&0&1&0&1
\end{pmatrix}.
$$
From this we get the following.
\begin{eqnarray*}
\ket{0}_L=\frac{1}{\sqrt{8}}[\ket{0000000} + \ket{0101101} + \ket{0011011} + \ket{0110110} \\
+ \ket{1111000} + \ket{1010101} + \ket{1100011} + \ket{1001110}]
\end{eqnarray*}
and, \begin{eqnarray*}
\ket{1}_L=\frac{1}{\sqrt{8}}[\ket{1111111}  + \ket{1010010} + \ket {1100100} +
\ket{1001001}\\
+  \ket{0000111}+ \ket{0101010} + \ket{0011100} + \ket{0110001}]
\end{eqnarray*}
It is easy to see that if we measure second, fourth and sixth qubits both for $\ket{0}_L$ and $\ket{1}_L$ and XORED the values, we will get back the logical bit values, i.e., for $\ket{0}_L$ 0 and for $\ket{1}_L$, 1. Hence, we can say that $\{2,4,6\}$ is a valid access structure for the above code. 

Now, according to the protocol the bit string will be $0_L1_L1_L0_L0_L0_L1_L1_L$. 
Let a user $A$ has attributes $\{2,4,6\}$. Each integer stands for an attribute. For example, 2 stands for name, 4 for designation and 6 for institute. Let A be authorized for the message $M$. In this case, AA will share the qubits $2,4,6$ for each $0_L$ and $1_L$. After getting the shares, $A$ will measure the qubits in $\{\ket{0},\ket{1}\}$ basis and xor the output bits. The XORED bit is the secret bit of the bit string. 

One should note that the set of attributes for a user might be a super set of the access structure of the secret sharing scheme. In that case, AA need not to send the shares corresponding to all attributes  to the user. It is enough to send the shares which satisfy the access structure of the scheme. In other words, the set of minimal attributes which satisfies the access structure of the scheme are sufficient to reconstruct the secret. In the example, if the attributes of A are $2, 4, 6, 5, 7$, then AA will send the shares corresponding to the attributes $2, 4, 6$ only.
\section{\label{technical}Technical Details}
\subsection{Semi Quantum Ciphertext-Policy Attribute-Based Encryption}


\textbf{\underline{System Setup}}:
\begin{enumerate}
\item DO chooses its message $M \in \{0,1\}^{2^m}$ and then express $M$ in bits (classical 0-1 bit)
\item AA chooses two bases $Z := \{\ket{0}, \ket{1} \}$ and $X := \{\ket{+}, \ket{-}\}$. AA makes this choice of bases public and also broadcast that whenever user will get 0, he/she will use Z basis and will X basis if gets 1. AA also selects a master secret key $msk$ for its own use.
\end{enumerate}
\textbf{\underline{Encryption}}:
\begin{enumerate}
\item To encrypt $M$, DO performs BB84 with AA and shares a bit-string $B$ of length $m$. Here, we consider $m$ bit security for the encryption scheme.
\item AA also generates a message tag $mt$ which contains the unique id of the DO and the time of performing BB84. AA sends this message tag $mt$ to the DO.
\item DO uses a secure stream cipher~\cite{SoK} to generate 
 to get a bit-string $B'$ of length $2^{m}$ using $B$ as seed.
\item DO take the bit expression of $M$ and consider the bit string $B'$ as basis choice. Then he generate the ciphertext as follows:
\begin{enumerate}
\item If $M_i = 0$ and $B'_i = 0$, DO generates the qubit $\ket{0}$.
\item If $M_i = 1$ and $B'_i = 0$, DO generates the qubit $\ket{1}$.
\item If $M_i = 0$ and $B'_i = 1$, DO generates the qubit $\ket{+}$.
\item If $M_i = 1$ and $B'_i = 1$, DO generates the qubit $\ket{-}$.
\end{enumerate}
Then DO sets this qubit-string as the cipher text $C$ of $M$.
\item DO choose an access policy $\mathcal{A}$ for this cipher text $C$, which specifies who can access this data and who can not.
\item DO sends this access policy $\mathcal{A}$ along with the message tag $mt$ to AA and stores the pair $(mt,~C,~\mathcal{A})$ in Cloud Storage.
\end{enumerate}
\textbf{\underline{Key Generation}}:
\begin{enumerate}
\item Let the total number of attributes in the system be $n$. AA takes $B$ which they generated by BB84. Then AA consider the integer $Int$ whose bit expression is same as $B$.
\item Then AA uses LSSS to generate secret shares of the secret $Int$ corresponding to each $n$ attributes. Then AA set the sequence $S := \{$secret share of $Int$ corresponding to the k-th attribute $att_k$ $\}_k$ where $k \in \{1,2,...,n\}$.
\item As we are considering $m$-bit security,  AA use $AES_{2m}$ to encrypt $S$ using the master secret key $msk$ and get $es$ as output. AA generates a pair $(mt,~es)$ and stores this in Cloud Storage.
\end{enumerate}
\textbf{\underline{Decryption}}:
\begin{enumerate}
\item An authorized DU first get the pair $(mt,~C,~\mathcal{A})$ from Cloud Storage and send $mt$ to AA along with its own attribute list $al := \{att_{i_j}\}_{j \in [n']}$ where $n' < n$.
\item AA get the pair $(mt,~es)$ from Cloud Storage and decrypt $es$ using $msk$ to get $S$.
\item Then AA sends $AS := \{$secret share of $Int$ corresponding to the k-th attribute $att_k$ $\}_{att_k \in al}$ (where $k \in \{1,2,...,n\}$) to DU.
\item DU computes the following:
\begin{enumerate}
\item DU uses reconstruction property of LSSS to get back the integer $Int$. After that DU sets $B$ as the bit expression (of length $m$) of $Int$.
\item Then DU uses the secure stream cipher~\cite{SoK} to generate $B'$ of length $2^{m}$ using $B$ as seed.
\item Then DU measures $C_i$ in Z basis if $B'_i = 0$ and measures $C_i$ in X basis if $B'_i = 1$.
\item Clearly after the above measurement DU gets the bit expression of the actual message $M$ and hence get $M$. 
\end{enumerate}
\end{enumerate}

\subsection{Fully Quantum Ciphertext-Policy Attribute-Based Encryption}

\textbf{\underline{System Setup}}: (same as semi-quantum)
\begin{enumerate}
\item DO chooses its message $M \in \{0,1\}^{2^m}$ and then express $M$ in bits (classical 0-1 bit)
\item AA chooses two bases $Z := \{\ket{0}, \ket{1} \}$ and $X := \{\ket{+}, \ket{-}\}$. AA makes this choice of bases public and also broadcast that whenever user will get 0, he/she will use Z basis and will X basis if gets 1. AA also selects a master secret key $msk$ for its own use.
\end{enumerate}
\textbf{\underline{Encryption}} (same as semi-quantum):
\begin{enumerate}
\item To encrypt $M$, DO performs BB84 with AA and shares a bit-string $B$ of length $m$. Here, we consider $m$ bit security for the encryption scheme.
\item AA also generates a message tag $mt$ which contains the unique id of the DO and the time of performing BB84. AA sends this message tag $mt$ to the DO.
\item DO uses a secure stream cipher~\cite{SoK} to generate 
 to get a bit-string $B'$ of length $2^{m}$ using $B$ as seed.
\item DO take the bit expression of $M$ and consider the bit string $B'$ as basis choices. Then he generate the ciphertext as follows:
\begin{enumerate}
\item If $M_i = 0$ and $B'_i = 0$, DO generates the qubit $\ket{0}$.
\item If $M_i = 1$ and $B'_i = 0$, DO generates the qubit $\ket{1}$.
\item If $M_i = 0$ and $B'_i = 1$, DO generates the qubit $\ket{+}$.
\item If $M_i = 1$ and $B'_i = 1$, DO generates the qubit $\ket{-}$.
\end{enumerate}
Then DO sets this qubit-string as the cipher text $C$ of $M$. DO stores the pair $(mt,~C)$ in Cloud Storage.
\item DO choose an access policy $\mathcal{A}$ for this cipher text $C$, which specifies who can access this data and who can not.
\item DO sends this access policy $\mathcal{A}$ along with the message tag $mt$ to AA.
\end{enumerate}
\textbf{\underline{Key Generation}}:
\begin{enumerate}
\item Let the total number of attributes in the system be $n$. AA takes $B$ which they generate by BB84 and does the following:
\begin{enumerate}
\item If $B_i = 0$, AA consider the secret qubit at $i$-th position as $\ket{0}_L$ (Here, we consider Quantum CSS code with $n$ qubits).
\item If $B_i = 1$, AA consider the secret qubit at $i$-th position as $\ket{1}_L$ (Here, we consider Quantum CSS code with $n$ qubits).
\end{enumerate}
\item Then AA generates secret shares of $\ket{i}_L$ corresponding to each $n$ attributes (considering the attributes as the parties of the quantum secret sharing scheme), for $i = 0,1$.
\item Then AA generates the sequence $S := \{\{$secret share of the qubit $\ket{j}_L$ corresponding to the k-th attribute $att_k$ $\}_k\}_i$ where $B_i = j$, $i \in \{1,2,...,m\}$, $k \in \{1,2,...,n\}$ and $j \in \{0,1\}$.
\item As we are considering $m$-bit security, AA use $AES_{2m}$ to encrypt $S$ using the master secret key $msk$ and get $es$ as output. AA generates a pair $(mt,~es)$ and stores this in Cloud Storage.
\end{enumerate}
\textbf{\underline{Decryption}}:
\begin{enumerate}
\item An authorized DU first get the pair $(mt,~C)$ from Cloud Storage and send $mt$ to AA along with its own attribute list $al := \{att_{i_j}\}_{j \in [n']}$ where $n' < n$.
\item AA get the pair $(mt,~es)$ from Cloud Storage and decrypt $es$ using $msk$ to get $S$.
\item Then AA sends $AS := \{\{$secret share of the qubit $\ket{j}_L$ corresponding to the k-th attribute $att_k$ $\}_{att_k \in al}\}_i$ (where $B_i = j$, $i \in \{1,2,...,m\}$, $k \in \{1,2,...,n\}$ and $j \in \{0,1\}$) to DU.
\item DU computes the following :
\begin{enumerate}
\item For each $i \in \{1,...,m\}$, reconstruct $B_i$ from $AS_i :=\{$secret share of the qubit $\ket{j}_L$ corresponding to the k-th attribute $att_k$ $\}_{att_k \in al}$ (using reconstruction part of quantum secret sharing, i.e., measure the superposition of all of the shares and then XOR all the bits of the measured state).
\item Then DU uses the secure stream cipher~\cite{SoK} to generate $B'$ of length $2^{m}$ using $B$ as seed.
\item Then DU measures $C_i$ in Z basis if $B'_i = 0$ and measures $C_i$ in X basis if $B'_i = 1$.
\item Clearly after the above measurement DU gets the bit expression of the actual message $M$ and hence get $M$. 
\end{enumerate}
\end{enumerate}

\section{\label{correct}Security Analysis of the Proposed QCP-ABE Schemes} In this section, we prove the correctness and secrecy of the schemes.
\subsection{Correctness}
\begin{theorem}
The proposed QCP-ABE schemes (Semi Quantum and Fully Quantum) are correct.
\end{theorem}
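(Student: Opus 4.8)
The plan is to verify correctness by tracing the \textbf{Decryption} procedure of each scheme and checking that every intermediate quantity computed by an honest, authorized DU equals the corresponding quantity fixed during \textbf{Encryption} and \textbf{Key Generation}. Concretely, I would establish three links in a chain: (i) the secret-sharing reconstruction step hands an authorized DU exactly the seed string $B$ shared between DO and AA via BB84; (ii) feeding $B$ to the deterministic secure stream cipher of \cite{SoK} reproduces the very same basis string $B'$ that DO used to prepare the ciphertext $C$; and (iii) measuring each ciphertext qubit $C_i$ in the basis dictated by $B'_i$ returns $M_i$ with certainty. Chaining (i)--(iii) over all positions yields $M$ exactly, which is the notion of correctness at stake.

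For step (i) in the \textbf{Semi-Quantum} scheme I would invoke the linear reconstruction property of LSSS recalled in Section~\ref{prel}: since DU's attribute list $al$ satisfies the access policy $\mathcal{A}$, the index set $I=\{i:\rho(i)\in al\}$ admits constants $\{w_i\}_{i\in I}$ with $\sum_{i\in I} w_i\lambda_i = Int$, where the $\lambda_i$ are exactly the shares AA distributed to DU; hence DU recovers $Int$ and therefore its $m$-bit expansion $B$ without error. For step (i) in the \textbf{Fully Quantum} scheme I would argue that, for each position $i$, the collection $AS_i$ consists of shares of the logical qubit $\ket{B_i}_L$ under the CSS-code-based quantum secret sharing scheme of \cite{maitra,nc}, and that a qualified attribute set lets DU apply the reconstruction map --- take the joint state of the held shares, measure in the $\{\ket{0},\ket{1}\}$ basis, and XOR the outcomes --- which returns the logical bit $B_i$, exactly as in the Steane-code illustration where measuring qubits $2,4,6$ and XOR-ing recovers the logical value. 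Repeating for $i=1,\dots,m$ reassembles $B$.

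Step (ii) is then immediate: the stream cipher is a deterministic function of its seed, so DU's $B'$ coincides bit-for-bit with DO's. Step (iii) is a one-line case analysis on single qubits: if $B'_i=0$ then $C_i\in\{\ket{0},\ket{1}\}$ and a $Z$-measurement returns $0$ or $1$ respectively; if $B'_i=1$ then $C_i\in\{\ket{+},\ket{-}\}$ and an $X$-measurement returns $0$ or $1$ respectively; in every case the preparation basis and the measurement basis agree, so the outcome equals $M_i$ with probability $1$. Since the \textbf{Encryption} step is identical in both schemes, step (iii) is shared; collecting the outcomes over $i=1,\dots,2^m$ reconstructs the bit expansion of $M$.

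The step I expect to be the main obstacle is (i) for the fully quantum scheme. One must be precise that the access structure realized by the CSS/quantum-secret-sharing layer is the same monotone structure as the ABE access policy $\mathcal{A}$, so that ``DU is authorized for $M$'' genuinely implies ``DU's attribute set is qualified in the quantum secret sharing scheme''; and one must justify that the informal reconstruction recipe ``measure the superposition of all shares and XOR the bits'' is well defined and outcome-deterministic on the logical subspace, which should be tied to the stabilizer and coset structure of the code as in \cite{maitra,nc}. The remaining pieces --- determinism of the stream cipher and the single-qubit measurement check --- are routine.
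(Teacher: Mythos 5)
Your proposal is correct and takes essentially the same route as the paper's own proof, which likewise reduces correctness to the chain ``same-basis measurement is deterministic, the stream cipher is deterministic on its seed, and the secret-sharing scheme (LSSS or quantum CSS-based) correctly reconstructs $B$''; you merely present the argument in the forward direction and with somewhat more detail (e.g.\ the explicit reconstruction constants $w_i$) than the paper, which simply invokes the correctness of LSSS and of the quantum secret sharing scheme.
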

\begin{proof}

 To prove the correctness of the scheme, first we have to show that for any message $M \in \{0,1\}^{2^m}$, Decryption(Encryption($M,B'_1$),$B'_2$) = $M$ (where $B'_1$ is the sequence used for the basis selection  at the time of encryption and $B'_2$ is the sequence used for the basis selection  at the time of decryption).

As we know that if a  qubit is generated and measured in the same basis then we can undo the measurement, i.e., the qubit remains as it is. For example, if we generate a qubit $q$ in $\{\ket{0},\ket{1}\}$ basis and measure in the same basis, then with probability 1 we can know whether $q$ was $\ket{0}$ or $\ket{1}$. Similarly, if we generate a qubit $q$ in $\{\ket{+},\ket{-}\}$  measure it in the same basis, then with probability 1 we can know whether $q$ was $\ket{+}$ or $\ket{-}$.

So, to prove the correctness it is enough to show that $B'_1 = B'_2$. Now if we observe our construction, we can see that we have generated the $2^m$-length string $B'_i$ using $m$-length string $B_i$ as a ``seed'' to a pseudo-random number generator, for $i \in \{1,2\}$. We know that if we feed the same ``seed'' as an input to the same pseudo-random generator at both encryption and decryption end, it will generate the same bit-string every time. 
So  we have to prove that $B_1 = B_2$.\\

For our first construction, i.e, for {\em Semi Quantum} case, we consider the seed $B_i$ as an integer $Int_i$ ( for $i \in \{1,2\}$). Here $Int_2$ is reconstructed from the secret shares of $Int_1$. To generate the secret share we have used LSSS. So $Int_1 \neq Int_2$ contradicts the correctness of well-known secret sharing scheme LSSS. Therefore, we get $Int_1 = Int_2$ which implies $B_1 = B_2$. Hence we get $B'_1 = B'_2$.

Similarly, for our second scheme, i.e., in case of {\em Fully Quantum}, the correctness of quantum secret sharing scheme leads to the result that $B_1 = B_2$. Hence we get $B'_1 = B'_2$.
This completes the proof.
\end{proof}

\begin{theorem}
A data user,  DU can successfully decrypt an encrypted message if and only if he/she satisfies the attributes determined by the data owner DO.
\end{theorem}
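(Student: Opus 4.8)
The plan is to prove the two implications separately: the ``if'' direction is essentially a corollary of Theorem~1 together with the reconstruction property of the secret sharing scheme, while the ``only if'' direction rests on the privacy of the underlying secret sharing (LSSS in the semi-quantum case, the CSS-code scheme in the fully quantum case) combined with the indistinguishability of the non-orthogonal bases $Z$ and $X$. First I would treat sufficiency. Suppose DU's attribute list $al$ is a qualified set for the access policy $\mathcal{A}$. In the semi-quantum scheme write $\mathcal{A}=(W,\rho)$ as an LSSS and set $I=\{i:\rho(i)\in al\}$; by the linear reconstruction property recalled in the preliminaries there exist constants $\{w_i\}_{i\in I}$ with $\sum_{i\in I}w_i\lambda_i=Int$, so DU recovers $Int$, hence the seed $B$, hence the pad $B'=B'_2$ by re-running the stream cipher on $B$. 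Theorem~1 then gives $B'_1=B'_2$, so measuring each $C_i$ in the basis dictated by $B'_i$ returns $M_i$ and DU obtains $M$. The fully quantum case is identical except that each $B_i$ is recovered by the reconstruction step of the quantum secret sharing scheme (measure the pooled shares in the $Z$ basis and XOR the outcomes), which succeeds precisely because $al$ contains a qualified set of that scheme.

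For necessity, suppose instead that $al$ is a forbidden set. The only share-bearing data DU ever receives are the elements of $AS$, namely the shares attached to attributes in $al$; the full share collection $S$ lives on the cloud only as $es=AES_{2m}(S)$ under the master key $msk$, which DU does not hold, and the tag $mt$ carries nothing but DO's identity and a timestamp. By the privacy property of LSSS (resp. of the quantum secret sharing scheme), a forbidden collection of shares is statistically independent of the secret, so the distribution of $B$ conditioned on DU's entire view is unchanged; DU therefore has no strategy better than guessing $B$.

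It then remains to argue that ignorance of $B$ genuinely obstructs recovery of $M$. Without the correct seed, the pad $B'$ is, by the security of the stream cipher, computationally indistinguishable from a uniformly random $2^m$-bit string, so DU is reduced to choosing a measurement basis for each ciphertext position $i$ without knowing $B'_i$. Whenever the wrong basis is chosen the outcome is an unbiased coin and the post-measurement state is disturbed; since $Z$ and $X$ are non-orthogonal (mutually unbiased), no measurement recovers $M_i$ with certainty, and by the no-cloning theorem DU cannot first duplicate $C_i$ in order to hedge both bases. Hence the probability that DU outputs the whole plaintext $M\in\{0,1\}^{2^m}$ is exponentially small, so DU cannot successfully decrypt, and the two implications together give the claim.

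The step I expect to be the main obstacle is the necessity direction, because it composes three different kinds of guarantee: an information-theoretic one (a forbidden set learns nothing about $B$), a computational one (pseudorandomness of $B'$ given only partial information about the seed), and a quantum one (indistinguishability of $\ket{+},\ket{-}$ from $\ket{0},\ket{1}$, plus no-cloning). Making this composition rigorous requires pinning down the adversary model — in particular that AA is honest and that distinct DUs do not pool their forbidden attribute sets into a qualified one — and a hybrid argument that first replaces the real pad $B'$ by a truly random string before invoking the quantum bound. I would state these assumptions explicitly rather than leave them implicit.
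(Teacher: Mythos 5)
Your decomposition into sufficiency and necessity, and your sufficiency argument, match the paper's: both reduce the forward direction to the reconstruction property of LSSS (resp.\ the quantum secret sharing scheme) to recover $B$, and then invoke the correctness theorem to finish the decryption. Where you genuinely diverge is the necessity direction. The paper disposes of it in one sentence: it supposes a user with a forbidden attribute set can decrypt, asserts that this means the user can reconstruct the secret, and declares this a violation of the ``correctness'' of the secret sharing scheme. That argument is both mislabelled (what must be invoked is the \emph{privacy} property of the scheme, not its correctness) and incomplete, because it silently assumes the only route to the plaintext passes through reconstructing $B$ --- i.e., that a DU could not attack the qubit ciphertext $C$ directly. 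Your proof supplies exactly the missing chain: perfect privacy of a forbidden share collection makes DU's view independent of $B$; security of the stream cipher then renders $B'$ indistinguishable from a uniform pad; and the indistinguishability of the non-orthogonal bases together with no-cloning bounds the success of any measurement strategy on $C$ that does not know $B'$. Your closing remarks about the hybrid argument and the non-collusion/honest-AA assumptions identify precisely the hypotheses the paper leaves implicit. The one caveat is that your necessity direction delivers a probabilistic and partly computational statement (success probability exponentially small, conditional on the stream-cipher assumption) rather than the flat impossibility the theorem's wording suggests; the paper's version has the same issue but hides it behind the appeal to ``correctness.''
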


\begin{proof}

First we will show that if the shares holding by a DU satisfies the access structure then DU can successfully decrypt an encrypted message. 
For this, it is enough to prove that if the attributes of a DU satisfy the access structure then DU can reconstruct the secret $B$. 

According to the protocol, both for Semi-Quantum and Fully Quantum, the shares corresponding to the attributes will be distributed among the users. Thus, if the attributes for a user satisfy the access structure of the secret sharing scheme, DU can reconstruct the secret.
For our first scheme, i.e., for {\em Semi Quantum} scheme correctness of LSSS and for our second scheme, i.e., for {\em Fully quantum } case, the correctness of quantum secret sharing scheme guarantees that if DU satisfies the access structure then DU can reconstruct the secret $B$.

For the converse part, let there exist a user who does not satisfy the access structure but can successfully decrypt the ciphertext. In other words, user's attributes do not satisfy the access structure but with the secret shares corresponding to those attributes he/she can reconstruct the secret, which violates the correctness of the secret sharing scheme for  both the cases. Therefore a data user DU can successfully decrypt an encrypted message if and only if he/she satisfies the attributes determined by the data owner DO.
\end{proof}
\subsection{Security}

For our constructions, we have some basic security assumptions like the classical one. We have assumed that 
\begin{enumerate}
\item the Attribute Authority (AA) is fully trusted.
\item Attribute Authority (AA) uses secure communication channels to send the secret shares of the key to the Data Users (DU).
\item an adversary can only interfere in the channel at the time of QKD. 
\end{enumerate}
Additionally, at the time of BB84, QKD amongst DO and AA, we have considered the followings assumptions.
\begin{enumerate}
    \item We assume the inherent physical correctness of 
Quantum Mechanics. 
\item No information leakage takes place from the legitimate parties' (DO and AA) laboratories. 
\item  The legitimate parties (DO and AA) have a sufficiently good knowledge about their sources.
    \end{enumerate}
    Based on these assumptions, we came up with the following theorems.
\begin{theorem}
The proposed QCP-ABE schemes are $\epsilon$-secure, where $\epsilon$ is infinitesimally small.
\end{theorem}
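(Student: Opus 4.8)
The plan is to prove $\epsilon$-security by a standard game-hopping (hybrid) argument, viewing the adversary $\mathcal{E}$'s advantage $\mathrm{Adv}(\mathcal{E})$ as its ability to distinguish encryptions of two chosen messages $M_0,M_1$ (equivalently, to extract any bit of $M$) while tampering with the BB84 channel and, optionally, while colluding with a set of unauthorized data users whose attributes form a forbidden set of the access policy $\mathcal{A}$. First I would argue that $\mathrm{Adv}(\mathcal{E})$ is bounded by a sum of four terms, one for each primitive the construction rests on: the secrecy parameter $\epsilon_{\mathrm{QKD}}$ of BB84, the distinguishing advantage $\epsilon_{\mathrm{PRG}}$ of the secure stream cipher, the IND-CPA advantage $\epsilon_{\mathrm{AES}}$ of $AES_{2m}$, and the leakage $\epsilon_{\mathrm{SS}}$ of the (classical LSSS or CSS-code-based quantum) secret-sharing scheme against forbidden sets; for perfect schemes $\epsilon_{\mathrm{SS}}=0$, which is guaranteed here by the reconstruction/privacy properties already invoked in the correctness proofs.

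The hops would be carried out in this order. \emph{Game 0} is the real scheme. In \emph{Game 1} the key $B$ produced by BB84 is replaced by a fresh uniformly random string independent of the adversary's view of the quantum channel; by the composable security of BB84 under the stated assumptions (inherent correctness of quantum mechanics, no leakage from the laboratories of DO and AA, well-characterized sources) Games 0 and 1 are $\epsilon_{\mathrm{QKD}}$-close, and this is the source of the ``infinitesimally small'' part of $\epsilon$, since $\epsilon_{\mathrm{QKD}}$ decays exponentially in the number of check rounds. In \emph{Game 2} the expanded string $B'$ is replaced by a truly random $2^m$-bit string, at cost $\epsilon_{\mathrm{PRG}}$. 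In \emph{Game 3} the stored value $es$ is replaced by an $AES_{2m}$ encryption of an all-zero string of the same length, at cost $\epsilon_{\mathrm{AES}}$; after this the colluding unauthorized users see only a ciphertext of a constant, so the only information they retain about $S$ is their own shares $AS$, which — their attribute set being forbidden — are independent of $B$ by secret-sharing privacy, contributing $\epsilon_{\mathrm{SS}}=0$. In the final game $C$ is exactly a BB84-style encoding of $M$ in a uniformly random, information-theoretically hidden basis string $B'$; since the averaged ensembles satisfy $\tfrac12\ket{0}\!\bra{0}+\tfrac12\ket{1}\!\bra{1}=\tfrac12\ket{+}\!\bra{+}+\tfrac12\ket{-}\!\bra{-}=\tfrac12\mathbb{I}$, the adversary's reduced state is independent of each bit of $M$, so its advantage in the last game is $0$. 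Summing gives $\mathrm{Adv}(\mathcal{E})\le \epsilon_{\mathrm{QKD}}+\epsilon_{\mathrm{PRG}}+\epsilon_{\mathrm{AES}}$, the claimed $\epsilon$.

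I would then separately spell out the two adversarial scenarios so the bound is visibly exhaustive: (i) a pure eavesdropper who only tampers with the QKD channel, fully handled by Game 1; and (ii) a coalition of data users with a forbidden attribute set who additionally fetch $(mt,C,\mathcal{A})$ and obtain their shares $AS$, handled by Games 2--3 plus secret-sharing privacy. A brief Holevo-bound remark would confirm that even a quantum adversary retaining the ciphertext register learns nothing about $M$ once $B'$ is uniform and secret, so the argument is not restricted to classical side information.

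The main obstacle I anticipate is conceptual rather than computational: the short QKD key $B$ is never used as a one-time pad but only as a \emph{seed} stretched by a classical stream cipher, and the shares are protected by $AES_{2m}$, so the scheme is strictly only \emph{computationally} secure and $\epsilon$ genuinely contains the nonzero terms $\epsilon_{\mathrm{PRG}}$ and $\epsilon_{\mathrm{AES}}$. Asserting that $\epsilon$ is ``infinitesimally small'' therefore requires either (a) restricting to computationally bounded $\mathcal{E}$ and invoking the assumed security of the stream cipher and of $AES_{2m}$, so those terms are $\mathrm{negl}(m)$, or (b) idealizing those primitives, leaving only $\epsilon_{\mathrm{QKD}}$, which truly tends to $0$. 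I would make this caveat explicit and prove the theorem in form (a), stating the final bound as $\mathrm{Adv}(\mathcal{E})\le \epsilon_{\mathrm{QKD}}+\mathrm{negl}(m)$ under the listed trust and channel assumptions together with the security of the stream cipher and $AES_{2m}$.
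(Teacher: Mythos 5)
Your proposal is correct in outline but follows a genuinely different and considerably more thorough route than the paper. The paper's proof is essentially two sentences: it invokes the composable $\epsilon$-security of BB84 under the stated assumptions to conclude that the seed $B$ satisfies $\Pr(B=B_{adv})=\epsilon$, notes that $B$ is then fed deterministically into the stream cipher at both ends, and declares the schemes $\epsilon$-secure. There is no game-hopping, no accounting for the distinguishing advantage of the stream cipher, no treatment of the $AES_{2m}$-encrypted shares stored in the cloud, and no analysis of collusion by users holding a forbidden attribute set --- all of which your hybrid argument handles explicitly via the decomposition $\mathrm{Adv}(\mathcal{E})\le \epsilon_{\mathrm{QKD}}+\epsilon_{\mathrm{PRG}}+\epsilon_{\mathrm{AES}}+\epsilon_{\mathrm{SS}}$ and the observation that the averaged basis ensembles are maximally mixed. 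Your closing caveat is the most valuable part and is something the paper does not confront: because $B$ is stretched by a stream cipher rather than used as a one-time pad, and because $S$ is protected only by $AES_{2m}$, the overall security is computational, so the theorem's claim that $\epsilon$ is ``infinitesimally small'' holds only under your option (a) (computationally bounded adversaries plus assumed security of the symmetric primitives) or option (b) (idealizing those primitives). The paper implicitly assumes something like (b) without saying so; your version makes the bound honest. What the paper's approach buys is brevity; what yours buys is an actual reduction that is exhaustive over the adversarial scenarios and exposes exactly which terms in $\epsilon$ are information-theoretic and which are merely negligible under computational assumptions.
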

\begin{proof}
According to the QCP-ABE protocols, Semi Quantum and Fully Quantum, the initial bit-string $B$ of length $m$ has been shared between DO and AA via BB84 (variants of BB84 can also be used). From the security proof of BB84, it is well proven that under the above assumptions the protocol is $\epsilon$ secure, where $\epsilon$ is infinitesimally small. Thus, $B$ is also $\epsilon$ secure. That is $\Pr(B=B_{adv})=\epsilon$, where $B_{adv}$ is the string generated at the adversary's end.

Next, $B$ will be used as a ``seed'' for the secure stream cipher used to generate $B'$ at both the encryption and decryption ends, i.e. at the ends of DO and DU. Hence, the proposed QCP-ABE schemes are $\epsilon$-secure.
\end{proof}
\begin{theorem}
The proposed QCP-ABE schemes demand $m$-bit security.
\end{theorem}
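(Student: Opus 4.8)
The plan is to locate the single short secret on which both protocols rest — the $m$-bit string $B$ shared by BB84 between DO and AA — and then to argue two things: (i) every other ingredient of the construction is parametrised so as not to lower the attainable security below that of $B$, and (ii) no adversary has a strategy cheaper than exhausting the $2^m$ candidate values of $B$. Together these say that the best attack costs $\Theta(2^m)$, i.e. the schemes operate at exactly $m$-bit security.

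First I would set up the reduction on the ciphertext side. Recovering $M$ from $C$ requires knowing the basis sequence $B'$: the qubits of $C$ sit in the two non-orthogonal bases $Z$ and $X$, so by indistinguishability of non-orthogonal states and the no-cloning theorem (already invoked in Section~\ref{overview}), an adversary who measures a qubit of $C$ without knowing its basis recovers the underlying message bit with probability only $1/2$ and, decisively, gets no signal telling her whether the guess was right; nor may she clone $C$ to retry. Hence $C$ alone leaks nothing about $M$ beyond what is learned by guessing $B'$, and since $B'$ is the output of a secure stream cipher on the $m$-bit seed $B$, any adversary running in time $o(2^m)$ has only negligible advantage in distinguishing $B'$ from random (let alone recovering it), while an adversary willing to pay $2^m$ can enumerate all seeds — which is precisely the $m$-bit work bound.

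Next I would close off the key-generation side. The shares $S$ — LSSS shares of the integer $Int$ in the semi-quantum scheme, CSS-code shares of $\ket{0}_L$ and $\ket{1}_L$ in the fully quantum one — are the only other objects derived from $B$, and by the information-theoretic privacy of LSSS and of the quantum secret-sharing scheme (cf.\ the converse direction proved above), any bundle of shares held by a DU whose attributes do not satisfy $\mathcal{A}$ is statistically independent of $B$. The shares travel on channels assumed secure, and $S$ is stored only as $es = AES_{2m}(S)$; the $2m$-bit AES key is chosen so that even a Grover-aided search for $msk$ costs $\Theta(2^m)$, matching rather than undercutting the target. Combining the two sides with the BB84 leakage $\epsilon$ from the preceding theorem, the success probability of any adversary is at most $\epsilon + 2^{-m}\cdot(\#\text{seed guesses})$, so breaking either scheme requires $\Omega(2^m)$ effort.

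The step I expect to be the main obstacle is making the first one rigorous against a \emph{quantum} adversary who simultaneously holds $C$, an arbitrary unauthorized bundle of shares, and $es$: one must rule out a clever joint (collective) measurement that beats independent per-qubit basis guessing. A purely combinatorial count does not suffice here; the argument has to lean on the composable security of BB84 for $B$ together with the privacy of the secret-sharing layer and the PRG security of the stream cipher, and — to block a Grover-type $2^{m/2}$ attack on the seed — on the fact that verifying a seed guess consumes the non-clonable ciphertext $C$ and so cannot be run coherently. I would therefore present the $2^m$ seed-exhaustion bound as the headline statement and flag the collective-attack case as the point that needs the heaviest machinery.
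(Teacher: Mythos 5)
Your proposal is correct in the sense that it contains the paper's entire argument as one sub-step, but it takes a genuinely broader route. The paper's proof is a single observation about step 3 of Key Generation: the share set $S$ is stored in the cloud as $es=AES_{2m}(S)$, and a Grover search over the $2m$-bit key space of $AES_{2m}$ costs $O(\sqrt{2^{2m}})=O(2^{m})$; the theorem is read off from that one calculation, with no attempt to argue that this is the adversary's best attack. You instead enumerate the attack surfaces — ciphertext-only attacks (non-orthogonality, no-cloning, PRG security of the stream cipher on the $m$-bit seed $B$), share-privacy for unauthorized attribute bundles, and finally the stored $es$ with the same Grover bound — and try to show none of them drops below $2^{m}$. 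What the paper's version buys is brevity and a clean justification of the parameter choice ($2m$-bit AES key so that Grover lands at the $m$-bit level); what yours buys is an actual case analysis of the kind the phrase ``demands $m$-bit security'' really calls for. Notably, you raise a concern the paper never addresses: a Grover-type search over the $m$-bit seed space would cost only $2^{m/2}$ and would undercut the claimed security unless coherent verification of seed guesses is impossible; your heuristic that verification consumes the non-clonable ciphertext is plausible but, as you concede, not rigorous, and the paper offers nothing on this point either. So your flagged obstacle is a genuine open issue with the theorem as stated, not a defect of your write-up relative to the paper's.
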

\begin{proof}
Consider, {\em step 3 of Key Generation phase} for both the protocols. In this phase, AA use $AES_{2m}$ to encrypt $S$ using the master secret key $msk$ and get $es$ as output. AA generates a pair $(mt,~es)$ and stores this in Cloud Storage. To decrypt $S$ from the cloud storage, the adversary needs to extract the secret key of $AES_{2m}$. In this case, Grover search over the key space will provide the optimal solution in $O\sqrt(2^{2m})=O(2^{m})$. Thus, the proposed QCP-ABE schemes demand $m$-bit security.
\end{proof}
\section{\label{application}Quantum Attribute-based encryption with dynamic access policies} To include dynamic features in the constructed scheme above we consider the following :
\begin{enumerate}
    \item Expandability: A new user should be able to enter the system.
    
    \item Renewability: The private key, attribute set and attribute values can be renewed and the old private key should not be of any use after the parameters of the user are renewed.
    
    \item Revocability: A user's private key can be revoked and the old keys should not decrypt the ciphertexts.
    
    \item Independence: When attribute updating occurs, the existing users are not required to renew their private keys. 
\end{enumerate}

\noindent Note that since we are working in the quantum domain, a very useful tool is the ``no-cloning" theorem and by this theorem, a user cannot make a copy of its share or hold on to its share. This simplifies the implementation of the above points in the scheme to a large extent. Let us suppose that the attribute authority AA now has a dynamic (evolving) access structure.

\medskip
    \textbf{Procedure 1:} \textbf{\underline {Modifications by AA.}}
    \begin{enumerate}
    
        \item {Do Step 2 for all qualified and forbidden sets at time $t$.}
        \item {Add a time-stamp to the access structure $\Gamma$ for time $t$, i.e., a qualified (resp. forbidden) set of attributes $(a_{1}, \ldots, a_{k})$ is modified as $(a_{1}, \ldots, a_{k}, q, t)$ (resp $(a_{1}, \ldots, a_{k}, f, t)$) where $q$ (resp. $f$) denotes that it is qualified (resp forbidden) at time $t$.}
        
        \item {Suppose the access structure  changes at time $t+1$ in two ways, either addition of new attribute(s) or change in access structure $\Gamma$ to a new access structure $\Gamma'.$} 
        \item In case of the addition of new attribute(s), no modification is required as the minimal set of attributes are sufficient to reconstruct the secret (please find section III, last paragraph). However, the total number of attributes is now $n+1$.
        \item If the attribute(s) changes in such a way so that the access structure $\Gamma$ changes to $\Gamma'$, then do the following.
        \item Do step 7 to 11 for all users at time $t+1$.
        \item {If a set $(a_{1}, \ldots, a_{k}, q, t) \in \Gamma$ becomes a forbidden set in $\Gamma'$, modify it as $(a_{1}, \ldots, a_{k}, f, t+1)$.}
        
        \item {If a set $(a_{1}, \ldots, a_{k}, f, t) \in \Gamma$ becomes a qualified set in $\Gamma'$, modify it as $(a_{1}, \ldots, a_{k}, q, t+1)$.}
        
        \item {If a set $(a_{1}, \ldots, a_{k}, q, t) \in \Gamma$ remains  qualified in $\Gamma'$, no changes are needed.}
        
        \item {If a set $(a_{1}, \ldots, a_{k}, f, t) \in \Gamma$ remains forbidden in $\Gamma'$, no changes are needed.}
        
        \item {For a set $Q \in \Gamma'$ such that $Q \notin \Gamma$, add time stamp as $(Q, q, t+1)$.}
        
        \item Stop.
    \end{enumerate}

\medskip
    \textbf{Procedure 2:} \textbf{\underline{Modifications for DU.}}
    \begin{enumerate}
    \item {An authorized DU first gets the pair $(mt,~C,~\mathcal{A})$ from Cloud Storage and send $mt$ to AA alongwith its own attribute list $al := \{att_{i_j}\}_{j \in [n']}$.}
    
    \item {AA/DU adds a time-stamp $t$ for comparison with the existing attribute policy.}
    
    \item Stop.
    \end{enumerate}

\medskip
    \textbf{Procedure 3:} \textbf{\underline {Modifications by DO when access}}\\
    \textbf{\underline {structure changes}}
    
    \begin{enumerate}
    \item {Let current time $= t_{c}.$}
    \item {Evaluate all previously submitted attributes for time $< t_{c}.$} 
    \item {For a time stamped set of attributes $(a_{1}, \ldots, a_{k}, t)$ ($t < t_{c}$), sent it to AA.}
    \item {AA runs procedure 1 and returns current status of the attributes ($q$ or $f$).}
    
    \item {If the current state is $q$, no modification needed.}
    
    \item {If the current state is $f$, assign a random state $\ket{\perp}$ to DU.}
    
    \item {If the current state is $f$, but previously it was $q$, give a random permutation to the encrypted states, i..e., states in $C$. This can also be done by the quantum one-time pad methodology, i.e., one key for one time only. And assign a random state $\ket{\perp}$ to DU. }
    
    \item {If the current state  $q$, but previously it was $f$, run encryption procedure of section IV.}
    
    \item Stop.
    \end{enumerate}

\medskip
\noindent Note that in section \ref{technical}, we have not assumed any \textit{apriori} bound on the number of data users. Any user with the correct set of attributes (as per the current access policy) is able to decode the secret. Hence from the point of the number of data users, our constructed scheme is dynamic.  Moreover,  the modifications of Procedures 1, 2 and 3, make the scheme renewable due to the following reason:  if at the current time, the attributes of the DU fails to be a qualified attribute policy, the DO/AA may assign a random state to the DU, i.e., applying a random permutation to the encrypted states, i..e., states in $C$. This can also be done by one-time pad methodology, i.e., one key for one time only. Again due to the ``no-cloning" theorem and indistinguishability of non-orthogonal states, a party cannot get any information about the old state from the new random state without the new secret key. The DU gets a new state by the expandable property of our constructed schemes. The DO/AA  revokes the private key of the DU which satisfy the revocability property of the schemes. Finally using the steps 7 and 8 of Procedure 1, the existing users whose attributes remain qualified at a later time do not need to renew their private keys. Hence our modifications make the scheme expandable, renewable, revocable and independent.

\section{\label{discussion}Discussions} In Ciphertext-Policy Attribute-Based Encryption, it is desirable that any party who has a correct set of attributes can reconstruct the secret. In this direction, we have made our encryption scheme dynamic. However not every dynamic scheme can handle unbounded number of participants. Specialized dynamic schemes which can handle such scenarios are evolving secret sharing schemes \cite{evolving2,qd5}. Our constructed scheme is simple and flexible enough to incorporate such schemes in the encryption method. Another advantage of  our scheme is that the dimension of the share states is exactly equal to the secret state (qubit of dimension 2) and hence is  within the reach of practical implementation. To compare our scheme with the existing schemes, to the best of our knowledge, this is the first construction of a quantum attribute-based encryption scheme.
We have proposed two schemes. One scheme is fully quantum and another is semi quantum. 

\section{\label{conclusion}Conclusion}

In the domain of classical cryptography, extensive work has been done on ABE. However, most of them depend on the hardness of  discrete logarithm problem which are proven to be vulnerable against Shor's algorithm. Recently, some Lattice based CP-ABE schemes have been proposed, but those are not as efficient as traditional Public Key Cryptography (RSA and ECC) and hard to implement. In this backdrop, we present two QCP-ABE schemes based on Quantum Key Distribution (QKD) and Quantum Error Correction Code (QECC). 

In this manuscript, we took help of CSS code only. However, it can be extended for any other QECC. In semi quantum version, we use Classical Linear  Secret Sharing Scheme (LSSS), though the encryption part remains same as fully quantum one, i.e.,  based on QKD. And that is why we call it semi quantum. Finally, the idea of dynamic access structure has been introduced to make the schemes more realistic. 

One limitation of our scheme is that one data user (DU) can decrypt only one message with one secret attribute key. For each decryption, data user (DU) has to ask attribute authority (AA) for the attribute key, which increases the communication cost. So one future work in this direction may be to modify this scheme in such a way so that data users (DU) can decrypt multiple messages using one attribute key. Another future work would be to extend our methods for constructing a quantum functional encryption scheme which is a generalization of an Attribute-Based Encryption scheme.

\end{document}